\theoremstyle{plain}
\newtheorem{thm}{\protect\theoremname}
\theoremstyle{remark}
\newtheorem{rem}[thm]{\protect\remarkname}
\theoremstyle{plain}
\newtheorem{lem}[thm]{\protect\lemmaname}
\theoremstyle{definition}
\newtheorem{example}[thm]{\protect\examplename}
\definecolor{green}{RGB}{100, 0, 120}
\newlength{\abstractwidth}
\g@addto@macro\normalsize{%
 \setlength{\abovedisplayskip}{10pt}
 \setlength{\belowdisplayskip}{14pt}
 \setlength{\abovedisplayshortskip}{10pt}
 \setlength{\belowdisplayshortskip}{14pt}}
\providecommand{\examplename}{Example}
\providecommand{\lemmaname}{Lemma}
\providecommand{\remarkname}{Remark}
\providecommand{\theoremname}{Theorem}
\begin{document}

\title{\textbf{Entanglement cost of discriminating noisy Bell states by
local operations and classical communication }}

\author{Somshubhro Bandyopadhyay\thanks{\texttt{Department of Physics, Bose Institute, Kolkata 700091, India;}
Email: \texttt{som@jcbose.ac.in}}$\hspace{4em}$Vincent Russo\thanks{\texttt{ISARA Corporation, Waterloo, ON N2L0A9, Canada}; Email: \texttt{vincentrusso1@gmail.com}{\large{} }}}
\maketitle
\begin{abstract}
Entangled states can help in quantum state discrimination by local
operations and classical communication (LOCC). For example, a Bell
state is necessary (and sufficient) to perfectly discriminate a set
of either three or four Bell states by LOCC. In this paper, we consider
the task of LOCC discrimination of the states of noisy Bell ensembles,
where a given ensemble consists of the states obtained by mixing the
Bell states with an arbitrary two-qubit state with nonzero probabilities.
It is proved that a Bell state is required for optimal discrimination
by LOCC, even though the ensembles do not contain, in general, any
maximally entangled state, and in specific instances, any entangled
state.
\end{abstract}

\section{Introduction }

The paradigm of $local$ $operations$ $and$ $classical$ $communication$
(LOCC) is of particular importance in quantum information theory \cite{LOCC}.
LOCC protocols involve two or more parties sharing a composite quantum
system who perform arbitrary quantum operations on the local subsystems
and communicate only via classical channels. Note that quantum communication
is not allowed between the parties. The framework of LOCC provides
a natural way to study the resource theory of quantum entanglement
\cite{Entanglement-horodecki}, the nonlocal properties of quantum
systems \cite{Brunner-et-al-2014,Gisin-1996,Camlet-2017,Peres-Wootters-1991,Massar-Popescu-1995,ben99,walgate-2002,HSSH,nis06,Bandyo-2011,Halder+-2019},
and applications thereof \cite{Barrett+2005,Acin+2006,Brunner+2008,Pironio+2010,Terhal2001,DiVincenzo2002,Eggeling2002,MatthewsWehnerWinter,Markham-Sanders-2008}. 

\subsection*{Local state discrimination}

One problem that has been extensively studied within the framework
of LOCC is discrimination of quantum states \cite{ben99,walgate-2002,HSSH,Bandyo-2011,Halder+-2019,ben99u,Walgate-2000,Virmani-2001,Ghosh-2001,divin03,rin04,Ghosh-2004,fan-2005,Watrous-2005,Nathanson-2005,Wootters-2006,Hayashi-etal-2006,Duan2007,Duan-2009,feng09,Calsamiglia-2010,BGK-2011,Yu-Duan-2012,Cosentino-2013,Cosentino-Russo-2014,halder}.
The problem may be briefly described as follows. Let $\mathcal{E}=\left\{ \left(p_{i},\rho_{i}\right):i=1,\dots,N\right\} $
be an ensemble of $k$-partite quantum states $\rho_{1},\dots,\rho_{N}$
with associated probabilities $p_{1},\dots,p_{N}$, where $k,N\geq2$.
Now suppose that $k$ separated parties share a quantum system prepared
in a state chosen from $\mathcal{E}$. The parties do not know the
identity of the state but they do know that the state has been chosen
from $\mathcal{E}.$ The goal is to gain as much knowledge about the
state of the system by means of LOCC. For example, if the given states
are mutually orthogonal, then they wish to find $which$ $state$
the system is in without error. This is evidently a state discrimination
problem in which the allowed measurements are only those that are
implementable by LOCC. So the question of interest here is the following:
For a given set of states, does there exist a LOCC measurement that
discriminates the states just as well as the best possible measurement
that may be performed on the whole system? 

In general, how well a given set of states can be discriminated can
be quantified by the success probability for minimum-error state discrimination\footnote{For general discussions on minimum-error discrimination one may consult
\cite{Holevo-1973,Helstrom-1976,Eldar+2003,Qiu-2008,Qiu-Li-2010}. }. The success probability is the optimized value of the average probability
of success, where the optimization is over either all measurements
or some specific class of measurements. Thus, for a given ensemble
$\mathcal{E}$, let $p\left(\mathcal{E}\right)$ and $p_{_{{\rm L}}}\left(\mathcal{E}\right)$
\cite{BN-2013} denote the success probability (global optimum) and
the local success probability (local optimum), where the corresponding
optimizations are taken over all measurements and LOCC measurements,
respectively. 

Let us now come back to the question of whether the global optimum
for a given set of states is always achievable by LOCC. The answer
turns out to be no in general; that is, sets that cannot be optimally
discriminated by LOCC, even if the states are all pure and mutually
orthogonal, exist. Once the initial results \cite{ben99,ben99u} established
this fact, most of the subsequent works were devoted to identifying
and characterizing the sets for which the global optimum is achievable
by LOCC \cite{Walgate-2000,Virmani-2001,Nathanson-2005} and those
for which it is not (e.g., \cite{walgate-2002,Ghosh-2001,Ghosh-2004,fan-2005,Watrous-2005,Nathanson-2005}).
For example, two pure states can be optimally discriminated by LOCC
\cite{Walgate-2000,Virmani-2001} but an entangled orthogonal basis,
such as the Bell basis, cannot be \cite{HSSH,Ghosh-2001,Ghosh-2004,Nathanson-2005}.
So given a set of states that cannot be optimally discriminated by
LOCC, one must, therefore, consider using quantum entanglement as
a resource for optimal discrimination. 

\subsection*{Entanglement as a resource for local state discrimination}

The limitations of LOCC protocols in discriminating quantum states
can be overcome with shared entanglement used as a resource \cite{B-IQC-2015,BGKW-2009,BRW-2010,Cohen-2008,BHN-2016,BHN-2018,LJ-2020}.
Consider a simple example: The Bell basis $\mathcal{B}$, which is
defined by the four Bell states, 
\begin{equation}
\begin{array}{cccc}
\left|\Psi_{1}\right\rangle =\frac{1}{\sqrt{2}}\left(\left|00\right\rangle +\left|11\right\rangle \right), &  &  & \left|\Psi_{2}\right\rangle =\frac{1}{\sqrt{2}}\left(\left|00\right\rangle -\left|11\right\rangle \right),\\
\left|\Psi_{3}\right\rangle =\frac{1}{\sqrt{2}}\left(\left|01\right\rangle +\left|10\right\rangle \right), &  &  & \left|\Psi_{4}\right\rangle =\frac{1}{\sqrt{2}}\left(\left|01\right\rangle -\left|10\right\rangle \right),
\end{array}\label{Bell-basis}
\end{equation}
cannot be perfectly discriminated by a LOCC measurement, even though
the states are mutually orthogonal \cite{Ghosh-2001}. In particular,
assuming the states are equally probable, one has (see e.g., \cite{BN-2013})
\begin{equation}
p_{_{{\rm L}}}\left(\mathcal{B}\right)=\frac{1}{2}.\label{p-L-B}
\end{equation}
Now suppose the parties are given a two-qubit ancillary state\footnote{One can assume the given form of $\left|\tau_{\varepsilon}\right\rangle $
because of the Schmidt decomposition.} 
\begin{equation}
\left|\tau_{\varepsilon}\right\rangle =\sqrt{\frac{1+\varepsilon}{2}}\left|00\right\rangle +\sqrt{\frac{1-\varepsilon}{2}}\left|11\right\rangle \label{tau}
\end{equation}
for some $\varepsilon\in\left[0,1\right]$, where $\left|\tau_{\varepsilon}\right\rangle $
is entangled for $0\leq\varepsilon<1$. Once again assuming the Bell
states are equiprobable, the local success probability for discriminating
the set of states
\[
\mathcal{B}\otimes\tau_{\varepsilon}=\left\{ \left|\Psi_{i}\right\rangle \otimes\left|\tau_{\varepsilon}\right\rangle :i=1,\dots,4\right\} 
\]
is given by \cite{B-IQC-2015} 
\begin{equation}
p_{_{{\rm L}}}\left(\mathcal{B}\otimes\tau_{\varepsilon}\right)=\frac{1}{2}\left(1+\sqrt{1-\varepsilon^{2}}\right)\label{p-L-Btimes-tau}
\end{equation}
for all $\varepsilon\in\left[0,1\right]$. This value is achievable
by a teleportation protocol\footnote{The teleportation protocol is the following: Alice teleports her part
of the unknown two-qubit state to Bob using $\left|\tau_{\varepsilon}\right\rangle $.
After teleportation, Bob performs a two-qubit measurement to discriminate
the states. If $\left|\tau_{\varepsilon}\right\rangle $ is maximally
entangled, this protocol is guaranteed to achieve the global optimum
for any set of states. }. Observe that the presence of $\left|\tau_{\varepsilon}\right\rangle $
enhances the local success probability; in particular, the more entangled
$\left|\tau_{\varepsilon}\right\rangle $ is, the higher is the local
success probability. But perfect discrimination is possible if and
only if $\varepsilon=0$, i.e., when $\left|\tau_{\varepsilon}\right\rangle $
is maximally entangled. 

\subsection*{Optimal resource states and entanglement cost }

Suppose the states of a given bipartite or multipartite ensemble $\mathcal{E}=\left\{ \left(p_{i},\rho_{i}\right):i=1,\dots,N\right\} $
cannot be optimally discriminated by LOCC. Let $\tau=\vert\tau\rangle\langle\tau\vert$
be a bipartite or multipartite ancilla state such that 
\begin{equation}
p_{_{{\rm L}}}\left(\mathcal{E}\otimes\tau\right)>p_{_{{\rm L}}}\left(\mathcal{E}\right),\label{useful resource}
\end{equation}
where $\mathcal{E}\otimes\tau=\left\{ \left(p_{i},\rho_{i}\otimes\tau\right):i=1,\dots,N\right\} $.
We then say that $\left|\tau\right\rangle $ is a resource for discriminating
the states of $\mathcal{E}$. 

Ultimately, however, the goal is to find a $\left|\tau\right\rangle $
that enables optimal discrimination of $\mathcal{E}$ by LOCC and
is also minimal in entanglement. The latter condition is imposed because
entanglement is generally regarded as an $expensive$ resource, and
therefore, we would like to consume as little entanglement as possible.
Satisfying the first condition is easy because one can always use
maximally entangled pair(s) and employ the teleportation protocol.
For example, any set from $\mathbb{C}^{d}\otimes\mathbb{C}^{d}$ (see
\cite{BHN-2016,BHN-2018} for discussions regarding multipartite systems)
can be optimally discriminated using LOCC and a $\mathbb{C}^{d}\otimes\mathbb{C}^{d}$
maximally entangled state as a resource. But finding a $\left|\tau\right\rangle $
that not only discriminates the states optimally but is also minimal
in entanglement is hard. That is because the teleportation protocol
using maximally entangled state(s) may not be the most efficient strategy
all the time, as one might do just as well with a clever protocol
that consumes less entanglement. 

For a given ensemble $\mathcal{E}$, we say that $\left|\tau\right\rangle \in\mathcal{H}_{\tau}$
is an $optimal$ resource if it enables optimal discrimination of
the states of $\mathcal{E}$ by LOCC, i.e., $p_{_{{\rm L}}}\left(\mathcal{E}\otimes\tau\right)=p\left(\mathcal{E}\right)$,
and is minimal in both entanglement and dimension \cite{BHN-2018},
i.e., for any other $\left|\tau^{\prime}\right\rangle \in\mathcal{H}_{\tau^{\prime}}$
satisfying $p_{_{{\rm L}}}\left(\mathcal{E}\otimes\tau^{\prime}\right)=p\left(\mathcal{E}\right)$,
it holds that $E\left(\tau\right)\leq E\left(\tau^{\prime}\right)$,
where $E$ is the entanglement entropy \cite{Entanglement-horodecki},
and $\dim\left(\mathcal{H}_{\tau}\right)\leq\dim\left(\mathcal{H}_{\tau^{\prime}}\right)$.
The entanglement of an optimal resource is said to be the $entanglement$
$cost$ of discriminating the states under consideration. For example,
a maximally entangled state is optimal for discriminating a maximally
entangled basis on $\mathbb{C}^{d}\otimes\mathbb{C}^{d}$ by LOCC\footnote{This follows from an argument that perfect discrimination of such
a basis would lead to distillation of a $\mathbb{C}^{d}\otimes\mathbb{C}^{d}$
maximally entangled state across a bipartition where there was no
entanglement to begin with \cite{Ghosh-2001,Ghosh-2004}.}, so the entanglement cost here is $\log_{2}d$ ebits. However, a
maximally entangled state is not always an optimal resource: a two-qubit
ensemble consisting of eight pure entangled states can be optimally
discriminated by LOCC using a nonmaximally entangled state \cite{BGKW-2009}. 

\subsection*{Motivation}

In entanglement-assisted local state discrimination, we are mainly
interested in finding the entanglement cost of discriminating the
states (or simply, entanglement cost) of some given ensemble. So which
factors determine the entanglement cost for a given ensemble? For
a bipartite orthonormal basis, the average entanglement of the basis
vectors provides a lower bound \cite{BGKW-2009,BRW-2010}. This lower
bound can be improved upon for almost all two-qubit entangled bases
and was shown to be strictly greater than the average entanglement
of the basis vectors \cite{BRW-2010}; however, finding the exact
values remains an open problem. 

The exact entanglement cost, however, is known for a few ensembles.
The entanglement cost is 1 ebit for the Bell basis \cite{Ghosh-2001},
a set of three Bell states \cite{B-IQC-2015}, and a set containing
a Bell state plus its orthogonal complement \cite{Yu-duan-PPT-2014},
but is less than an ebit for a set of nonorthogonal pure states and
is given by their average entanglement \cite{BGKW-2009}. In higher
dimensions, to the best of our knowledge, exact results are known
for a maximally entangled basis and a set containing a maximally entangled
state and its orthogonal complement \cite{Yu-duan-PPT-2014}.

Generally speaking, for an arbitrary ensemble $\mathcal{E}$, finding
the entanglement cost of LOCC discrimination, which is equivalent
to the problem of finding an optimal resource state, seems quite hard,
even if the states are from $\mathbb{C}^{2}\otimes\mathbb{C}^{2}$,
the smallest composite state space. Alternatively, one might ask:
For which bipartite ensembles is a maximally entangled state necessary
for optimal discrimination by LOCC? Now the ensembles from $\mathbb{C}^{2}\otimes\mathbb{C}^{2}$
that are known to require a Bell state have a common feature: Each
contains at least one of the four Bell states. This, therefore, raises
a very basic question: Suppose that we are given a set of two-qubit
states that does not contain a maximally entangled state and that
cannot be optimally discriminated by LOCC. Do we still require a Bell
state for optimal discrimination by LOCC? The present paper answers
this question in the affirmative and also shows that a Bell state
is required for optimal discrimination of some sets that do not even
contain an entangled state. 

\subsection*{Problem statement and overview of results }

Specifically, we study the problem of discriminating a set of ``noisy''
Bell states by LOCC assuming a uniform probability distribution; that
is, each state has an equal chance of being distributed to Alice and
Bob. A noisy Bell state, in general, results from actions of quantum
channels on one or both qubits of the Bell state in question. In fact,
the task of LOCC discrimination of four Bell states in a realistic
scenario boils down to LOCC discrimination of four noisy Bell states.
That is because the unknown Bell state must be distributed to Alice
and Bob through quantum channels that are noisy in practice. 

In this paper we shall assume that a noisy Bell state results from
mixing a Bell state with a two-qubit state with probabilities $\lambda$
and $\left(1-\lambda\right)$, where $0\leq\lambda\leq1$, or as a
consequence of the action of a quantum channel that leaves a Bell
state unchanged with probability $\lambda$ but converts it into a
two-qubit state with probability $\left(1-\lambda\right)$. 

Let $\mbox{D}\left(\mathbb{C}^{2}\otimes\mathbb{C}^{2}\right)$ be
the set of all two-qubit density matrices. Let $\Psi_{i}=\left|\Psi_{i}\right\rangle \left\langle \Psi_{i}\right|$
denote the density operator corresponding to the Bell state $\left|\Psi_{i}\right\rangle $
given by (\ref{Bell-basis}), and let $\varsigma$ be the density
operator corresponding to a two-qubit state that could be either pure
or mixed.

Consider a uniform collection of noisy Bell states 
\begin{equation}
\mathcal{B}_{\lambda,\varsigma}=\left\{ \varrho_{i}:i=1,\dots,4\right\} \subset\mbox{D}\left(\mathbb{C}^{2}\otimes\mathbb{C}^{2}\right),\label{set-S}
\end{equation}
where 
\begin{equation}
\varrho_{i}=\lambda\Psi_{i}+\left(1-\lambda\right)\varsigma\label{noisy Bell state}
\end{equation}
for $\lambda\in\left[0,1\right]$. The set $\mathcal{B}_{\lambda,\varsigma}$
is therefore completely determined by both $\lambda$ and $\varsigma$.
Of course, the situation where $\lambda=0$ is not interesting. 

Observe that for a fixed $j\in\left\{ 1,\dots,4\right\} $ one has
\begin{equation}
\max_{i\in\left\{ 1,\dots,4\right\} }\left\langle \Psi_{j}\right|\varrho_{i}\left|\Psi_{j}\right\rangle =\left\langle \Psi_{j}\right|\varrho_{j}\left|\Psi_{j}\right\rangle .\label{imp-eq.}
\end{equation}
Equation (\ref{imp-eq.}) means the following: Suppose a Bell measurement
is performed on a two-qubit system that has been prepared with equal
probability in one of $\left\{ \varrho_{i}\right\} $. Then, given
an outcome $j$, where $j\in\left\{ 1,\dots,4\right\} $, the system
was most likely prepared in the state $\varrho_{j}$. Note that, in
general,
\begin{equation}
\max_{j\in\left\{ 1,\dots,4\right\} }\left\langle \Psi_{j}\right|\varrho_{i}\left|\Psi_{j}\right\rangle \neq\left\langle \Psi_{i}\right|\varrho_{i}\left|\Psi_{i}\right\rangle .\label{imp-eq.2}
\end{equation}
The distinction between (\ref{imp-eq.}) and (\ref{imp-eq.2}) is
important. 

How well the states $\varrho_{i}$ can be discriminated is quantified
by the global optimum $p\left(\mathcal{B}_{\lambda,\varsigma}\right)$.
As the states are nonorthogonal except for $\lambda=1$, it holds
that $p\left(\mathcal{B}_{\lambda,\varsigma}\right)\leq1$ where equality
holds if and only if $\lambda=1$, i.e., for the Bell basis. Our first
result is a lower bound on $p\left(\mathcal{B}_{\lambda,\varsigma}\right)$:
For $\lambda\in\left[0,1\right]$ and any two-qubit state $\varsigma$,
it holds that
\begin{equation}
p\left(\mathcal{B}_{\lambda,\varsigma}\right)\geq\frac{1}{4}\left(1+3\lambda\right).\label{lower-bound-global}
\end{equation}
Later we will find that the lower bound is, in fact, the exact formula. 

Next, we compute the local optimum. The local success probability
of discriminating the states of $\mathcal{B}_{\lambda,\varsigma}$
is: 
\begin{equation}
p_{_{{\rm L}}}\left(\mathcal{B}_{\lambda,\varsigma}\right)=\frac{1}{4}\left(1+\lambda\right)\label{local optimum}
\end{equation}
for $\lambda\in\left[0,1\right]$ and any two-qubit state $\varsigma$.
Now observe that 
\[
p_{_{{\rm L}}}\left(\mathcal{B}_{\lambda,\varsigma}\right)<\frac{1}{4}\left(1+3\lambda\right)\leq p\left(\mathcal{B}_{\lambda,\varsigma}\right)\;\;\forall\lambda\in\left(0,1\right],
\]
which, in turn, proves that the states of $\mathcal{B}_{\lambda,\varsigma}$
cannot be optimally discriminated by LOCC for any $\lambda\in\left(0,1\right]$
and any two-qubit $\varsigma$. 

So the next thing is to find the entanglement cost of discriminating
the states of $\mathcal{B}_{\lambda,\varsigma}$ using LOCC. We assume
that $\left|\tau_{\varepsilon}\right\rangle $ $\left[\mbox{given by }\eqref{tau}\right]$
is used as a resource. 

First, we obtain the success probability of discriminating the states
of $\mathcal{B}_{\lambda,\varsigma}$ using LOCC and $\left|\tau_{\varepsilon}\right\rangle $.
The local success probability of distinguishing the states of 
\[
\mathcal{B}_{\lambda,\varsigma}\otimes\tau_{\varepsilon}=\left\{ \varrho_{i}\otimes\tau_{\varepsilon}:i=1,\dots,4\right\} 
\]
is given by: 
\begin{equation}
p_{_{{\rm L}}}\left(\mathcal{B}_{\lambda,\varsigma}\otimes\tau_{\varepsilon}\right)=\frac{1}{4}\left(1+\lambda+2\lambda\sqrt{1-\varepsilon^{2}}\right)\label{optimal-prob-ELOCC}
\end{equation}
for $\varepsilon\in\left[0,1\right]$, $\lambda\in\left[0,1\right]$,
and any two-qubit state $\varsigma$. Equation (\ref{optimal-prob-ELOCC})
can also be written as
\[
p_{_{{\rm L}}}\left(\mathcal{B}_{\lambda,\varsigma}\otimes\tau_{\varepsilon}\right)=p_{_{{\rm L}}}\left(\mathcal{B}_{\lambda,\varsigma}\right)+\frac{1}{2}\lambda\sqrt{1-\varepsilon^{2}}.
\]
Observe the contribution of the resource in the above equation, which
is given by the the second term on the right hand side for all $\varepsilon\in\left[0,1\right)$.
In particular, the presence of $\left|\tau_{\varepsilon}\right\rangle $
for any $\varepsilon\in\left[0,1\right)$ enhances the ability to
discriminate the states of $\mathcal{B}_{\lambda,\varsigma}$ by LOCC. 

Equation (\ref{optimal-prob-ELOCC}) leads to the formula for the
global optimum $p\left(\mathcal{B}_{\lambda,\varsigma}\right)$: 
\begin{equation}
p\left(\mathcal{B}_{\lambda,\varsigma}\right)=\frac{1}{4}\left(1+3\lambda\right)\label{global-optimum}
\end{equation}
for $\lambda\in\left[0,1\right]$ and any two-qubit state $\varsigma$.
So the lower bound from (\ref{lower-bound-global}) turns out to be
exact. 

Now, for $\lambda\in\left(0,1\right]$, 
\begin{equation}
p_{_{{\rm L}}}\left(\mathcal{B}_{\lambda,\varsigma}\otimes\tau_{\varepsilon}\right)\leq p\left(\mathcal{B}_{\lambda,\varsigma}\right),\label{cost-equality}
\end{equation}
where the equality holds if and only if $\varepsilon=0$. This, therefore,
gives us the entanglement cost. In particular, the entanglement cost
of discriminating the states of $\mathcal{B}_{\lambda,\varsigma}$
by LOCC is $1$ ebit for any $\lambda\in\left(0,1\right]$ and two-qubit
state $\varsigma$. Thus, a maximally entangled state is required
for optimal discrimination of the states of $\mathcal{B}_{\lambda,\varsigma}$
by LOCC, although for any given value of $\lambda\in\left(0,1\right)$
the ensemble, in general, does not contain any maximally entangled
state. This shows that a maximally entangled state may be required
to optimally discriminate a set of states none of which are maximally
entangled. 
\begin{rem}
The success probabilities $p_{_{{\rm L}}}\left(\mathcal{B}_{\lambda,\varsigma}\right)$,
$p_{_{{\rm L}}}\left(\mathcal{B}_{\lambda,\varsigma}\otimes\tau_{\varepsilon}\right)$,
and $p\left(\mathcal{B}_{\lambda,\varsigma}\right)$ are all independent
of $\varsigma$. This is not something that was expected \emph{a priori}
but seems to be the consequence of the fact that the Bell states are
all mixed with the same $\varsigma$. We should not expect something
similar if different two-qubit states are mixed with different Bell
states. 

The entanglement cost is seen to be independent of the gap between
the local and global optima that are given by (\ref{local optimum})
and (\ref{global-optimum}), respectively. As long as the gap remains
finite, no matter how small, the entanglement cost remains $1$ ebit,
irrespective of the entanglement or other properties of the states. 
\end{rem}

We illustrate the results with an example in which $\varsigma$ is
taken to be the maximally mixed state $\frac{1}{4}\boldsymbol{\mathsf{1}}$,
where $\boldsymbol{\mathsf{1}}$ is the identity operator acting on
the two-qubit state space. The general result tells us that the entanglement
cost is $1$ ebit for $\lambda\in\left(0,1\right]$. In this case,
the states are entangled for $\lambda\in\left(\frac{1}{3},1\right]$
but separable for $\lambda\in\left(0,\frac{1}{3}\right]$. So if we
consider a set $\mathcal{B}_{\lambda,\varsigma}$ for some $\lambda\in\left(0,\frac{1}{3}\right]$
and $\varsigma=\frac{1}{4}\boldsymbol{\mathsf{1}}$, then such a set
contains only separable states. Nevertheless, optimal discrimination
by LOCC requires a two-qubit maximally entangled state as a resource. 

\section{Preliminaries }

There is no tractable characterization of the set of LOCC measurements.
In fact, even deciding whether a measurement on a composite system
describes an LOCC measurement is computationally hard. For these reasons,
LOCC state discrimination problems are often investigated by considering
the more tractable classes: separable (SEP) measurements \cite{Duan-2009,BN-2013,B-IQC-2015}
and positive partial transpose (PPT) measurements \cite{Cosentino-2013,Cosentino-Russo-2014,Yu-duan-PPT-2014,B-IQC-2015}.
A separable measurement is one in which the measurement operators
are all separable, and a PPT measurement is one in which the measurement
operators are all positive under partial transposition. These measurements
often yield useful results and insights. One accordingly defines $p_{_{{\rm SEP}}}\left(\mathcal{E}\right)$
as the separable success probability and $p_{_{{\rm PPT}}}\left(\mathcal{E}\right)$
as the PPT success probability. Since
\begin{equation}
\left\{ \mbox{LOCC}\right\} \subset\left\{ \mbox{SEP}\right\} \subset\left\{ \mbox{PPT}\right\} \subset\left\{ \mbox{all}\right\} ,\label{inclusions}
\end{equation}
it holds that
\begin{equation}
p_{_{{\rm L}}}\left(\mathcal{E}\right)\leq p_{_{{\rm SEP}}}\left(\mathcal{E}\right)\leq p_{_{{\rm PPT}}}\left(\mathcal{E}\right)\leq p\left(\mathcal{E}\right).\label{series inequalities}
\end{equation}
Note that, if $\mathcal{E}$ can be optimally discriminated by LOCC,
then the above inequalities turn into equalities. On the other hand,
if $\mathcal{E}$ can be optimally discriminated by a separable measurement
but not by LOCC \cite{ben99,ben99u}, then only the first inequality
is strict. It may also be instructive to note that in the case of
four Bell states only the last one is strict. 

Let $\mathcal{X}$ and $\mathcal{Y}$ represent finite-dimensional
Hilbert spaces associated with quantum systems that belong to Alice
and Bob respectively. Let $\text{Pos}\left(\mathcal{X}\right)$, $\text{Pos}\left(\mathcal{Y}\right)$,
and $\text{Pos}\left(\mathcal{X}\otimes\mathcal{Y}\right)$ denote
the sets of positive semidefinite operators acting on $\mathcal{X}$,
$\mathcal{Y}$, and $\mathcal{X}\otimes\mathcal{Y}$, respectively.
An operator $P\in\text{Pos}\left(\mathcal{X}\otimes\mathcal{Y}\right)$
is PPT if ${\rm T}_{\mathcal{X}}\left(P\right)\in\text{Pos}\left(\mathcal{X}\otimes\mathcal{Y}\right)$,
where $\text{T}_{\mathcal{X}}$ represents partial transposition taken
in the standard basis $\left\{ \left|0\right\rangle ,\dots,\left|d-1\right\rangle \right\} $
of $\mathcal{X}$. A PPT measurement is defined by a collection of
measurement operators $\left\{ P_{1},\dots,P_{N}\right\} $ in which
each operator is PPT. 

Let us denote the set of all PPT operators acting on $\mathcal{X}\otimes\mathcal{Y}$
by $\text{PPT}\left(\mathcal{X}:\mathcal{Y}\right)$. The set $\text{PPT}\left(\mathcal{X}:\mathcal{Y}\right)$
is a closed, convex cone. For a given ensemble $\mathcal{E}=\left\{ \left(p_{1},\rho_{1}\right),\dots,\left(p_{N},\rho_{N}\right)\right\} \subset\mathcal{X}\otimes\mathcal{Y}$
the problem of finding $p_{_{{\rm PPT}}}\left(\mathcal{E}\right)$
can be expressed as a semidefinite program \cite{Cosentino-2013}:\\

$\mathtt{Primal}\;\mathtt{problem:}$

\begin{eqnarray*}
\mathtt{maximize}:\;\; &  & \sum_{i=1}^{N}p_{i}\text{Tr}\left(\rho_{i}P_{i}\right)\\
\mathtt{subject\,\mathtt{to}}: &  & \sum_{i=1}^{N}P_{i}=\boldsymbol{\mathsf{1}}_{\mathcal{X}\otimes\mathcal{Y}}\\
 &  & P_{k}\in\text{PPT}\left(\mathcal{X}:\mathcal{Y}\right)\;\;\left(\text{for\;\ each }k=1,\dots,N\right)
\end{eqnarray*}

$\mathtt{Dual\;problem:}$
\begin{eqnarray*}
\mathtt{minimize}:\;\; &  & \text{Tr}\left(H\right)\\
\mathtt{subject}\,\mathtt{to}: &  & H-p_{k}\rho_{k}\in\text{PPT}\left(\mathcal{X}:\mathcal{Y}\right)\;\;\left(\text{for\;\ each }k=1,\dots,N\right)\\
 &  & H\in\text{Herm}\left(\mathcal{X}\otimes\mathcal{Y}\right),
\end{eqnarray*}
where $\text{Herm}\left(\mathcal{X}\otimes\mathcal{Y}\right)$ is
the set of Hermitian operators acting on $\mathcal{X}\otimes\mathcal{Y}$.
By weak duality every feasible solution of the dual problem provides
an upper bound on $p_{_{{\rm PPT}}}\left(\mathcal{E}\right)$.

\section{LOCC discrimination of $\mathcal{B}_{\lambda,\varsigma}$}

In this section we prove that the states of a set $\mathcal{B}_{\lambda,\varsigma}$
cannot be optimally discriminated by LOCC for $\lambda\in\left(0,1\right]$
and any choice of a two-qubit state $\varsigma$. 

Let $\mathcal{X}_{1}=\mathbb{C}^{2}$ and $\mathcal{Y}_{1}=\mathbb{C}^{2}$
denote the Hilbert spaces of Alice and Bob respectively. First, we
give a lower bound on $p\left(\mathcal{B}_{\lambda,\varsigma}\right)$. 
\begin{lem}
\label{lower-bound on p(S)} $p\left(\mathcal{B}_{\lambda,\varsigma}\right)\geq\frac{1}{4}\left(1+3\lambda\right)$
for $\lambda\in\left[0,1\right]$ and any $\varsigma\in\mbox{D}\left(\mathcal{X}_{1}\otimes\mathcal{Y}_{1}\right)$. 
\end{lem}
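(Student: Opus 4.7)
The plan is to exhibit an explicit feasible POVM whose success probability equals $\frac{1}{4}(1+3\lambda)$; since the global optimum is the maximum over all POVMs, this immediately gives the desired lower bound. The natural candidate is the Bell measurement itself, i.e., $M_{j}=\Psi_{j}$ for $j=1,\dots,4$. This is a bona fide POVM because the Bell states form an orthonormal basis, so $\sum_{j}\Psi_{j}=\boldsymbol{\mathsf{1}}_{\mathcal{X}_{1}\otimes\mathcal{Y}_{1}}$ and each $\Psi_{j}$ is a rank-one projector.

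Next I would compute the average success probability under this POVM. Using $\varrho_{i}=\lambda\Psi_{i}+(1-\lambda)\varsigma$ and the uniform prior $\tfrac{1}{4}$, the success probability reads
\begin{equation*}
\frac{1}{4}\sum_{i=1}^{4}\operatorname{Tr}(\Psi_{i}\varrho_{i})
=\frac{\lambda}{4}\sum_{i=1}^{4}\operatorname{Tr}(\Psi_{i}^{2})+\frac{1-\lambda}{4}\sum_{i=1}^{4}\operatorname{Tr}(\Psi_{i}\varsigma).
\end{equation*}
The two sums are then handled by elementary identities: $\operatorname{Tr}(\Psi_{i}^{2})=1$ because each $\Psi_{i}$ is a pure-state projector, and $\sum_{i}\operatorname{Tr}(\Psi_{i}\varsigma)=\operatorname{Tr}\!\bigl((\sum_{i}\Psi_{i})\varsigma\bigr)=\operatorname{Tr}(\varsigma)=1$. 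Substituting, the expression collapses to $\lambda+\tfrac{1-\lambda}{4}=\tfrac{1}{4}(1+3\lambda)$.

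Concluding, since the Bell POVM achieves this value, $p(\mathcal{B}_{\lambda,\varsigma})\geq\tfrac{1}{4}(1+3\lambda)$, holding for every $\lambda\in[0,1]$ and every two-qubit state $\varsigma$. I do not anticipate a genuine obstacle here: the argument is essentially a one-line application of the identity $\sum_{i}\Psi_{i}=\boldsymbol{\mathsf{1}}$ together with purity of the $\Psi_{i}$. The only thing worth emphasizing in the write-up is the equality from \eqref{imp-eq.}, which is what singles out the Bell measurement as the right choice: outcome $j$ is most likely produced precisely by state $\varrho_{j}$, so assigning label $j$ to outcome $j$ is the natural (and, as the later sections of the paper show, globally optimal) guessing rule.
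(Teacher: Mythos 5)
Your proposal is correct and is essentially identical to the paper's proof: both use the Bell measurement with the identity guessing rule (justified by equation \eqref{imp-eq.}) and the identities $\operatorname{Tr}(\Psi_i^2)=1$ and $\sum_i\Psi_i=\boldsymbol{\mathsf{1}}$ to evaluate the success probability as $\frac{1}{4}(1+3\lambda)$. The only cosmetic difference is that the paper writes the bound via $\frac{1}{4}\sum_a\max_i\operatorname{Tr}(\varrho_iM_a)$ before specializing, which amounts to the same computation.
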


\begin{proof}
For any quantum measurement $\left\{ M_{a}\right\} $ on $\mathcal{X}_{1}\otimes\mathcal{Y}_{1}$,
it holds that
\[
p\left(\mathcal{B}_{\lambda,\varsigma}\right)\geq\frac{1}{4}\sum_{a}\max_{i\in\left\{ 1,\dots,4\right\} }\mbox{Tr}\left(\varrho_{i}M_{a}\right).
\]
 Choosing $\left\{ M_{a}\right\} $ to be the Bell measurement $\left\{ \Psi_{1},\Psi_{2},\Psi_{3},\Psi_{4}\right\} $,
we get 
\begin{eqnarray}
p\left(\mathcal{B}_{\lambda,\varsigma}\right) & \geq & \frac{1}{4}\sum_{a=1}^{4}\max_{i\in\left\{ 1,\dots,4\right\} }\left\langle \Psi_{a}\left|\varrho_{i}\right|\Psi_{a}\right\rangle .\label{lemma-2-second}
\end{eqnarray}
Noting that $\max_{i\in\left\{ 1,\dots,4\right\} }\left\langle \Psi_{a}\left|\varrho_{i}\right|\Psi_{a}\right\rangle =\left\langle \Psi_{a}\left|\varrho_{a}\right|\Psi_{a}\right\rangle $,
we can write (\ref{lemma-2-second}) as
\begin{alignat}{1}
p\left(\mathcal{B}_{\lambda,\varsigma}\right) & \geq\frac{1}{4}\sum_{a=1}^{4}\left\langle \Psi_{a}\left|\varrho_{a}\right|\Psi_{a}\right\rangle \nonumber \\
 & =\lambda+\left(\frac{1-\lambda}{4}\right)\sum_{a=1}^{4}\left\langle \Psi_{a}\left|\varsigma\right|\Psi_{a}\right\rangle \nonumber \\
 & =\frac{1}{4}\left(1+3\lambda\right).\label{lemma2-last equation}
\end{alignat}
To arrive at the last line we have used $\sum_{a=1}^{4}\left\langle \Psi_{a}\left|\varsigma\right|\Psi_{a}\right\rangle =1$.
Clearly, (\ref{lemma2-last equation}) holds for all $\lambda\in\left[0,1\right]$
and any $\varsigma$. 
\end{proof}
\begin{lem}
\label{p-local(S)} $p_{_{{\rm L}}}\left(\mathcal{B}_{\lambda,\varsigma}\right)=\frac{1}{4}\left(1+\lambda\right)$
for $\lambda\in\left[0,1\right]$ and any $\varsigma\in\mbox{D}\left(\mathbb{\mathcal{X}}_{1}\otimes\mathcal{Y}_{1}\right)$. 
\end{lem}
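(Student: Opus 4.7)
The proof would reduce the local optimum for $\mathcal{B}_{\lambda,\varsigma}$ directly to the known value $p_{_{{\rm L}}}(\mathcal{B})=\tfrac{1}{2}$ from~\eqref{p-L-B}, by means of an affine identity that holds for \emph{every} POVM on $\mathcal{X}_1\otimes\mathcal{Y}_1$, LOCC or otherwise. The appeal of this route is that it gives both inequalities ($\le$ and $\ge$) simultaneously and makes transparent the $\varsigma$-independence flagged in the Remark.

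The one substantive observation is that for any POVM $\{M_a\}$ on $\mathcal{X}_1\otimes\mathcal{Y}_1$ and each outcome~$a$,
\[
\max_{i\in\{1,\dots,4\}}\mathrm{Tr}(\varrho_i M_a)=\lambda\max_{i\in\{1,\dots,4\}}\mathrm{Tr}(\Psi_i M_a)+(1-\lambda)\mathrm{Tr}(\varsigma M_a),
\]
because the summand $(1-\lambda)\mathrm{Tr}(\varsigma M_a)$ is independent of $i$ and therefore passes out of the maximum. Summing over $a$ and using $\sum_a M_a=\mathbf{1}$ together with $\mathrm{Tr}(\varsigma)=1$, and then dividing by $4$, produces the clean identity
\[
P(\mathcal{B}_{\lambda,\varsigma};M)=\lambda\,P(\mathcal{B};M)+\frac{1-\lambda}{4},
\]
where $P(\mathcal{E};M):=\tfrac{1}{4}\sum_a\max_i\mathrm{Tr}(\rho_i M_a)$ denotes the success probability for discriminating the uniform ensemble $\mathcal{E}$ using the POVM $\{M_a\}$ with the optimal guessing rule. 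Crucially, the same collection $\{M_a\}$ appears on both sides, so restricting to LOCC POVMs is a constraint preserved by the identity.

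Taking the supremum over LOCC POVMs on both sides, and observing that the affine map $x\mapsto\lambda x+\tfrac{1-\lambda}{4}$ commutes with the supremum (since $\lambda\ge 0$), I obtain
\[
p_{_{{\rm L}}}(\mathcal{B}_{\lambda,\varsigma})=\lambda\,p_{_{{\rm L}}}(\mathcal{B})+\frac{1-\lambda}{4}=\frac{\lambda}{2}+\frac{1-\lambda}{4}=\frac{1+\lambda}{4},
\]
after substituting~\eqref{p-L-B}. There is no real obstacle: the entire argument hinges on interchanging $\max_i$ with the $\varsigma$-term, which succeeds precisely because the admixed noise $(1-\lambda)\varsigma$ is \emph{identical} across all four states of the ensemble; this is also the structural reason for the $\varsigma$-independence. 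As a sanity check one can concretely exhibit an optimal LOCC strategy achieving the bound---for instance, Alice and Bob each measure in the standard basis and guess a $\Psi_i$ consistent with the parity of their outcomes---so that no sup-versus-max subtlety is being glossed over.
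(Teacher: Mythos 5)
Your proof is correct, and it takes a genuinely different route from the paper's. You exploit the affine identity $P(\mathcal{B}_{\lambda,\varsigma};M)=\lambda P(\mathcal{B};M)+\tfrac{1-\lambda}{4}$, valid for every POVM because the admixed term $(1-\lambda)\operatorname{Tr}(\varsigma M_a)$ is constant in $i$ and passes through the maximum, and then push the supremum over LOCC measurements through the increasing affine map; combined with the quoted value $p_{_{\rm L}}(\mathcal{B})=\tfrac12$ from \eqref{p-L-B} this yields both inequalities at once. The paper instead proves the upper bound from scratch: it exhibits an explicit dual-feasible operator $H_{\lambda}=\tfrac18[\lambda\boldsymbol{\mathsf{1}}+2(1-\lambda)\varsigma]$ for the PPT semidefinite program, verifies $\mbox{T}_{\mathcal{X}_1}(H_\lambda-\tfrac14\varrho_i)=\tfrac{\lambda}{4}\Psi_{5-i}\succeq 0$, and concludes $p_{_{\rm PPT}}(\mathcal{B}_{\lambda,\varsigma})\le\tfrac14(1+\lambda)$ by weak duality; the matching lower bound comes from the same computational-basis protocol you mention. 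The trade-offs: your reduction is shorter, makes the $\varsigma$-independence structurally obvious, and as a bonus the same identity applied to the unrestricted supremum immediately gives $p(\mathcal{B}_{\lambda,\varsigma})=\lambda\cdot 1+\tfrac{1-\lambda}{4}=\tfrac14(1+3\lambda)$, which the paper only establishes later via the entanglement-assisted analysis. The cost is that you outsource the genuinely hard step --- the converse bound $p_{_{\rm L}}(\mathcal{B})\le\tfrac12$ --- to the cited literature, whereas the paper's dual certificate makes the lemma self-contained, proves the stronger statement that even PPT measurements cannot exceed $\tfrac14(1+\lambda)$, and serves as the template for the operator $H_{\lambda,\varepsilon}$ used in the entanglement-assisted theorem that follows. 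One small point worth making explicit if you write this up: the formulation $p_{_{\rm L}}(\mathcal{E})=\sup_M\tfrac14\sum_a\max_i\operatorname{Tr}(\rho_iM_a)$ over LOCC POVMs requires noting that coarse-graining outcomes according to the optimal guessing rule keeps the measurement within LOCC; this is standard (classical communication is free) but is the hinge on which the whole reduction turns.
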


\begin{proof}
The proof contains two parts. First, we show that $p_{_{{\rm PPT}}}\left(\mathcal{B}_{\lambda,\varsigma}\right)\leq\frac{1}{4}\left(1+\lambda\right)$
and then we will give a local protocol that achieves this bound. 

Let $\lambda\in\left[0,1\right]$. Consider the operator
\begin{equation}
H_{\lambda}=\frac{1}{8}\left[\lambda\boldsymbol{\mathsf{1}}_{\mathcal{X}_{1}\otimes\mathcal{Y}_{1}}+2\left(1-\lambda\right)\varsigma\right]\in\mbox{Herm}\left(\mathcal{X}_{1}\otimes\mathcal{Y}_{1}\right),\label{H-lambda}
\end{equation}
where $\boldsymbol{\mathsf{1}}_{\mathcal{X}_{1}\otimes\mathcal{Y}_{1}}$
is the identity operator acting on $\mathcal{X}_{1}\otimes\mathcal{Y}_{1}$.
Then
\begin{equation}
\mbox{Tr}\left(H_{\lambda}\right)=\frac{1}{4}\left(1+\lambda\right).\label{Trace-H-Lambda}
\end{equation}
We will now show that $H_{\lambda}$ is a feasible solution of the
dual of the PPT state discrimination problem. In particular, we will
show that
\begin{equation}
\mbox{T}_{\mathcal{X}_{1}}\left(H_{\lambda}-\frac{1}{4}\varrho_{i}\right)\in\mbox{Pos}\left(\mathcal{X}_{1}\otimes\mathcal{Y}_{1}\right)\;\forall i=1,\dots,4,\label{dual-feasibility-1}
\end{equation}
which is a sufficient condition for dual feasibility. 

Observe that 
\begin{alignat*}{1}
H_{\lambda}-\frac{1}{4}\varrho_{i} & =\frac{1}{8}\left(\lambda\boldsymbol{\mathsf{1}}_{\mathcal{X}_{1}\otimes\mathcal{Y}_{1}}-2\lambda\Psi_{i}\right)\\
 & =\frac{\lambda}{4}\left(\frac{1}{2}\boldsymbol{\mathsf{1}}_{\mathcal{X}_{1}\otimes\mathcal{Y}_{1}}-\Psi_{i}\right)\\
 & =\frac{\lambda}{4}\mbox{T}_{\mathcal{X}_{1}}\left(\Psi_{5-i}\right)\;\;\hspace{1em}\hspace{1em}\left(\mbox{for every }i=1,\dots,4\right)
\end{alignat*}
Hence 
\[
\mbox{T}_{\mathcal{X}_{1}}\left(H_{\lambda}-\frac{1}{4}\varrho_{i}\right)=\frac{\lambda}{4}\Psi_{i-5}\in\mbox{Pos}\left(\mathcal{X}_{1}\otimes\mathcal{Y}_{1}\right)
\]
for every $i=1,\dots,4$. So by weak duality we have
\begin{equation}
p_{_{{\rm PPT}}}\left(\mathcal{B}_{\lambda,\varsigma}\right)\leq\mbox{Tr}\left(H_{\lambda}\right)=\frac{1}{4}\left(1+\lambda\right).\label{upper-bound}
\end{equation}
Consequently, 
\begin{equation}
p_{_{{\rm L}}}\left(\mathcal{B}_{\lambda,\varsigma}\right)\leq p_{_{{\rm PPT}}}\left(\mathcal{B}_{\lambda,\varsigma}\right)\leq\frac{1}{4}\left(1+\lambda\right).\label{upperbound}
\end{equation}
We will now show that the upper bound (\ref{upperbound}) is also
a lower bound on $p_{_{{\rm L}}}\left(\mathcal{B}_{\lambda,\varsigma}\right)$.
Choosing the local measurement in the computational basis $\left\{ \left|a\right\rangle :a\in\left\{ 00,01,10,11\right\} \right\} $,
we get
\begin{alignat}{1}
p_{_{{\rm L}}}\left(\mathcal{B}_{\lambda,\varsigma}\right) & \geq\frac{1}{4}\sum_{a}\max_{i}\left\langle a\left|\varrho_{i}\right|a\right\rangle \nonumber \\
 & =\frac{\lambda}{2}+\left(\frac{1-\lambda}{4}\right)\sum_{a}\left\langle a\left|\varsigma\right|a\right\rangle \nonumber \\
 & =\frac{1}{4}\left(1+\lambda\right).\label{lower-bound}
\end{alignat}
From (\ref{upperbound}) and (\ref{lower-bound}) it follows that
\begin{equation}
p_{_{{\rm L}}}\left(\mathcal{B}_{\lambda,\varsigma}\right)=\frac{1}{4}\left(1+\lambda\right)\label{p-L-S}
\end{equation}
for $\lambda\in\left[0,1\right]$ and any two-qubit state $\varsigma$. 
\end{proof}
Lemmas \ref{lower-bound on p(S)} and \ref{p-local(S)} together imply:
\[
p_{_{{\rm L}}}\left(\mathcal{B}_{\lambda,\varsigma}\right)<\frac{1}{4}\left(1+3\lambda\right)\leq p\left(\mathcal{B}_{\lambda,\varsigma}\right)\;\;\mbox{for }\lambda\in\left(0,1\right],
\]
which proves the following theorem. 
\begin{thm}
The states of a set $\mathcal{B}_{\lambda,\varsigma}$, as defined
by (\ref{set-S}), cannot be optimally discriminated by LOCC for any
$\lambda\in\left(0,1\right]$ and any two-qubit state $\varsigma$. 
\end{thm}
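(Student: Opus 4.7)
The plan is to combine the two lemmas already established in the excerpt and observe that they give strictly incompatible values of the success probability on the two sides of the inequality $p_{_{{\rm L}}} \le p$. Since Lemma \ref{lower-bound on p(S)} furnishes the lower bound $p\left(\mathcal{B}_{\lambda,\varsigma}\right) \geq \tfrac{1}{4}(1+3\lambda)$ and Lemma \ref{p-local(S)} computes the exact value $p_{_{{\rm L}}}\left(\mathcal{B}_{\lambda,\varsigma}\right) = \tfrac{1}{4}(1+\lambda)$, the theorem reduces to comparing these two quantities.

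First I would recall both lemmas and chain their outputs:
\[
p_{_{{\rm L}}}\left(\mathcal{B}_{\lambda,\varsigma}\right) \;=\; \tfrac{1}{4}(1+\lambda) \;<\; \tfrac{1}{4}(1+3\lambda) \;\leq\; p\left(\mathcal{B}_{\lambda,\varsigma}\right),
\]
with the strict middle inequality being the decisive step. That middle inequality is equivalent to $\lambda < 3\lambda$, which holds precisely when $\lambda > 0$, so the assumption $\lambda \in (0,1]$ is used exactly here and nowhere else.

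Second I would observe that the resulting strict inequality $p_{_{{\rm L}}}\left(\mathcal{B}_{\lambda,\varsigma}\right) < p\left(\mathcal{B}_{\lambda,\varsigma}\right)$ is, by the very definition of optimal LOCC discrimination, the statement that no LOCC measurement attains the global optimum; hence the states of $\mathcal{B}_{\lambda,\varsigma}$ are not optimally LOCC-distinguishable. Because Lemmas \ref{lower-bound on p(S)} and \ref{p-local(S)} both hold for \emph{any} two-qubit $\varsigma \in \mbox{D}(\mathcal{X}_1 \otimes \mathcal{Y}_1)$ and for every $\lambda \in [0,1]$, the conclusion transfers verbatim to the full parameter range $\lambda \in (0,1]$ and arbitrary $\varsigma$.

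There is essentially no obstacle here: all the analytic content has been absorbed into the two preceding lemmas. The only minor thing to be careful about is to emphasize that we genuinely need a strict gap (not merely $p_{_{{\rm L}}} \le p$), which is why the lower bound from Lemma \ref{lower-bound on p(S)} — rather than just Lemma \ref{p-local(S)} alone — is essential, and why the boundary case $\lambda = 0$ (where the ensemble degenerates to four copies of $\varsigma$ and trivially admits a ``discrimination'' protocol at the chance level) is correctly excluded.
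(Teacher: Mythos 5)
Your proof is correct and follows exactly the paper's own argument: the theorem is deduced by chaining Lemma \ref{lower-bound on p(S)} and Lemma \ref{p-local(S)} into $p_{_{{\rm L}}}\left(\mathcal{B}_{\lambda,\varsigma}\right)=\frac{1}{4}(1+\lambda)<\frac{1}{4}(1+3\lambda)\leq p\left(\mathcal{B}_{\lambda,\varsigma}\right)$ for $\lambda\in(0,1]$. Your added remarks on why the strict gap requires $\lambda>0$ are accurate and consistent with the paper.
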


In the next section, we take up the question of finding the entanglement
cost: the amount of entanglement one must consume to optimally discriminate
the states of a set $\mathcal{B}_{\lambda,\varsigma}$, where $\lambda\in\left(0,1\right]$,
by LOCC. 

\section{The entanglement cost of discriminating $\mathcal{B}_{\lambda,\varsigma}$ }

Let us now assume that Alice and Bob share an additional resource
state $\left|\tau_{\varepsilon}\right\rangle \in\mathcal{X}_{2}\otimes\mathcal{Y}_{2}$
defined by (\ref{tau}), where $\mathcal{X}_{2}=\mathbb{C}^{2}$ and
$\mathcal{Y}_{2}=\mathbb{C}^{2}$ are the Hilbert spaces associated
with the ancilla systems. That means we now consider the task of LOCC
discrimination of the states corresponding to the set
\begin{equation}
\mathcal{B}_{\lambda,\varsigma}\otimes\tau_{\varepsilon}=\left\{ \varrho_{i}\otimes\tau_{\varepsilon}:i=1,\dots,4\right\} \subset\left(\mathcal{X}_{1}\otimes\mathcal{Y}_{1}\right)\otimes\left(\mathcal{X}_{2}\otimes\mathcal{Y}_{2}\right),\label{S cross tau}
\end{equation}
where the states are all equally probable, and $\tau_{\varepsilon}=\left|\tau_{\varepsilon}\right\rangle \left\langle \tau_{\varepsilon}\right|\subset\mbox{D}\left(\mathcal{X}_{2}\otimes\mathcal{Y}_{2}\right)$. 
\begin{thm}
The local success probability of discriminating the states of $\mathcal{B}_{\lambda,\varsigma}\otimes\tau_{\varepsilon}$
is given by 
\begin{equation}
p_{_{L}}\left(\mathcal{B}_{\lambda,\varsigma}\otimes\tau_{\varepsilon}\right)=\frac{1}{4}\left(1+\lambda+2\lambda\sqrt{1-\varepsilon^{2}}\right)\label{Theorem-5}
\end{equation}
for $\varepsilon\in\left[0,1\right]$, $\lambda\in\left[0,1\right]$,
and any $\varsigma\in\mbox{D}\left(\mathcal{X}_{1}\otimes\mathcal{Y}_{1}\right)$. 
\end{thm}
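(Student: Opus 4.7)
My plan is to follow the two-step template of Lemma \ref{p-local(S)}: exhibit a PPT dual feasible witness $H_{\lambda,\varepsilon}$ whose trace equals the right-hand side of \eqref{Theorem-5}, so that weak duality yields $p_{_{\rm PPT}}(\mathcal{B}_{\lambda,\varsigma}\otimes\tau_{\varepsilon})\le\tfrac{1}{4}(1+\lambda+2\lambda\sqrt{1-\varepsilon^2})$, and match this from below by an explicit LOCC protocol, producing equality via $p_{_{\rm L}}\le p_{_{\rm PPT}}$.

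For the lower bound I would lift to the noisy setting the LOCC protocol that attains \eqref{p-L-Btimes-tau} on the clean ensemble $\mathcal{B}\otimes\tau_{\varepsilon}$, such as the Alice-teleports-then-Bob-measures scheme from the footnote after \eqref{p-L-Btimes-tau}. Letting $\{M_i\}_{i=1}^{4}$ be its POVM and using $\varrho_i=\lambda\Psi_i+(1-\lambda)\varsigma$, linearity gives
\[
p_{_{\rm L}}(\mathcal{B}_{\lambda,\varsigma}\otimes\tau_{\varepsilon}) \,\ge\, \frac{1}{4}\sum_{i=1}^{4}\mbox{Tr}\bigl((\varrho_i\otimes\tau_{\varepsilon})M_i\bigr) \,=\, \lambda\cdot\tfrac{1}{2}\bigl(1+\sqrt{1-\varepsilon^2}\bigr) \,+\, \frac{1-\lambda}{4}\,\mbox{Tr}\Bigl((\varsigma\otimes\tau_{\varepsilon})\textstyle\sum_i M_i\Bigr).
\]
Since $\sum_i M_i=\boldsymbol{\mathsf{1}}$ the second term collapses to $(1-\lambda)/4$, and a simple rearrangement delivers the claimed value.

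For the upper bound I would use the ansatz
\[
H_{\lambda,\varepsilon} \,=\, \lambda\,W_{\varepsilon} \,+\, \tfrac{1-\lambda}{4}\,\varsigma\otimes\tau_{\varepsilon},
\]
where $W_{\varepsilon}\in\mbox{Herm}(\mathcal{X}_1\otimes\mathcal{Y}_1\otimes\mathcal{X}_2\otimes\mathcal{Y}_2)$ is a dual feasible witness for the $\lambda=1$ problem, i.e.\ $\mbox{Tr}(W_{\varepsilon})=\tfrac{1}{2}(1+\sqrt{1-\varepsilon^2})$ with $\mbox{T}_{\mathcal{X}_1\otimes\mathcal{X}_2}\bigl(W_{\varepsilon}-\tfrac{1}{4}\Psi_i\otimes\tau_{\varepsilon}\bigr)$ positive semidefinite for every $i$. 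The $\varsigma$-dependent parts of $H_{\lambda,\varepsilon}-\tfrac{1}{4}\varrho_i\otimes\tau_{\varepsilon}$ cancel, leaving
\[
H_{\lambda,\varepsilon}-\tfrac{1}{4}\varrho_i\otimes\tau_{\varepsilon} \,=\, \lambda\bigl(W_{\varepsilon}-\tfrac{1}{4}\Psi_i\otimes\tau_{\varepsilon}\bigr),
\]
so PPT feasibility of $W_{\varepsilon}$ lifts verbatim to $H_{\lambda,\varepsilon}$, and the trace bookkeeping reproduces the desired number.

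The real obstacle is supplying such a $W_{\varepsilon}$, which amounts to proving $p_{_{\rm PPT}}(\mathcal{B}\otimes\tau_{\varepsilon})\le\tfrac{1}{2}(1+\sqrt{1-\varepsilon^2})$; the matching LOCC lower bound is \eqref{p-L-Btimes-tau} from \cite{B-IQC-2015}. To produce the PPT upper bound by hand, I would pick a Bell-covariant ansatz (for instance a combination of $\boldsymbol{\mathsf{1}}_{\mathcal{X}_1\otimes\mathcal{Y}_1}\otimes\tau_{\varepsilon}$ with a correction term built from the Bell projectors and the orthogonal partner of $|\tau_{\varepsilon}\rangle$) and verify the four partial-transpose inequalities by exploiting the local-unitary symmetry that permutes the $\Psi_i$'s, reducing the four checks to a single algebraic identity. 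Everything else proceeds in parallel to Lemma \ref{p-local(S)}.
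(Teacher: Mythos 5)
Your proposal is correct and follows essentially the same route as the paper: the same decomposition $H_{\lambda,\varepsilon}=\lambda W_{\varepsilon}+\tfrac{1-\lambda}{4}\varsigma\otimes\tau_{\varepsilon}$ with the $\varsigma$-terms cancelling so that feasibility reduces to the $\lambda=1$ witness, and the same teleport-then-Bell-measure protocol for the matching lower bound (you evaluate it by linearity rather than by tracking the post-teleportation states, which is equivalent). The one piece you flag as an obstacle --- an explicit PPT-feasible $W_{\varepsilon}$ with $\mbox{Tr}(W_{\varepsilon})=\tfrac{1}{2}(1+\sqrt{1-\varepsilon^{2}})$ --- is exactly what the paper imports from \cite{B-IQC-2015}, namely $H_{\varepsilon}=\tfrac{1}{8}\bigl[\boldsymbol{\mathsf{1}}_{\mathcal{X}_{1}\otimes\mathcal{Y}_{1}}\otimes\tau_{\varepsilon}+\sqrt{1-\varepsilon^{2}}\,\boldsymbol{\mathsf{1}}_{\mathcal{X}_{1}\otimes\mathcal{Y}_{1}}\otimes\mbox{T}_{\mathcal{X}_{2}}\left(\Psi_{4}\right)\bigr]$, so no new construction is needed.
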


\begin{proof}
Let $\varepsilon\in\left[0,1\right]$, $\lambda\in\left[0,1\right]$,
and $\varsigma\in\mbox{D}\left(\mathcal{X}_{1}\otimes\mathcal{Y}_{1}\right)$.
First, we will prove that 
\begin{equation}
p_{_{{\rm PPT}}}\left(\mathcal{B}_{\lambda,\varsigma}\otimes\tau_{\varepsilon}\right)\leq\frac{1}{4}\left(1+\lambda+2\lambda\sqrt{1-\varepsilon^{2}}\right),\label{eupper-bound-p-PPT}
\end{equation}
and then we will give a local protocol that achieves this upper bound. 

Define the operator: 
\begin{equation}
H_{\lambda,\varepsilon}=\lambda H_{\varepsilon}+\left(\frac{1-\lambda}{4}\right)\varsigma\otimes\tau_{\varepsilon}\in\mbox{Herm}\left(\mathcal{X}_{1}\otimes\mathcal{Y}_{1}\otimes\mathcal{X}_{2}\otimes\mathcal{Y}_{2}\right),\label{H-lambda-eps}
\end{equation}
where 
\begin{equation}
H_{\epsilon}=\frac{1}{8}\left[\boldsymbol{\mathsf{1}}_{\mathcal{X}_{1}\otimes\mathcal{Y}_{1}}\otimes\tau_{\varepsilon}+\sqrt{1-\varepsilon^{2}}\mathsf{\boldsymbol{1}}_{\mathcal{X}_{1}\otimes\mathcal{Y}_{1}}\otimes\mbox{T}_{\mathcal{X}_{2}}\left(\Psi_{4}\right)\right]\in\mbox{Herm}\left(\mathcal{X}_{1}\otimes\mathcal{Y}_{1}\otimes\mathcal{X}_{2}\otimes\mathcal{Y}_{2}\right).\label{H-epsilon}
\end{equation}
It holds that
\begin{equation}
\mbox{Tr}\left(H_{\lambda,\varepsilon}\right)=\frac{1}{4}\left(1+\lambda+2\lambda\sqrt{1-\varepsilon^{2}}\right).\label{Trace-H-Lambda-eps}
\end{equation}
We now show that $H_{\lambda,\varepsilon}$ is a feasible solution
of the dual problem of discriminating the states $\varrho_{i}\otimes\tau_{\varepsilon}$,
$i=1,\dots,4$, by PPT measurements. In particular, we will prove
that
\begin{equation}
\left(\mbox{T}_{\mathcal{X}_{1}}\otimes\mbox{T}_{\mathcal{X}_{2}}\right)\left(H_{\lambda,\varepsilon}-\frac{1}{4}\varrho_{i}\otimes\tau_{\varepsilon}\right)\in\mbox{Pos}\left(\mathcal{X}_{1}\otimes\mathcal{Y}_{1}\otimes\mathcal{X}_{2}\otimes\mathcal{Y}_{2}\right)\;\forall i=1,\dots,4,\label{dual-feasibility}
\end{equation}
which is a sufficient condition for dual feasibility. The proof is,
in fact, almost immediate. Observe that
\begin{alignat*}{1}
H_{\lambda,\varepsilon}-\frac{1}{4}\varrho_{i}\otimes\tau_{\varepsilon} & =\lambda\left(H_{\varepsilon}-\frac{1}{4}\Psi_{i}\otimes\tau_{\varepsilon}\right)\;\;\left(\mbox{for every }i=1,\dots,4\right).
\end{alignat*}
Therefore, 
\begin{alignat*}{1}
\left(\mbox{T}_{\mathcal{X}_{1}}\otimes\mbox{T}_{\mathcal{X}_{2}}\right)\left(H_{\lambda,\varepsilon}-\frac{1}{4}\varrho_{i}\otimes\tau_{\varepsilon}\right) & =\lambda\left(\mbox{T}_{\mathcal{X}_{1}}\otimes\mbox{T}_{\mathcal{X}_{2}}\right)\left(H_{\varepsilon}-\frac{1}{4}\Psi_{i}\otimes\tau_{\varepsilon}\right)\;\;\left(\mbox{for every }i=1,\dots,4\right)
\end{alignat*}
which is positive semidefinite \cite{B-IQC-2015}. So we have 
\[
\left(\mbox{T}_{\mathcal{X}_{1}}\otimes\mbox{T}_{\mathcal{X}_{2}}\right)\left(H_{\lambda,\varepsilon}-\frac{1}{4}\varrho_{i}\otimes\tau_{\varepsilon}\right)\in\mbox{Pos}\left(\mathcal{X}_{1}\otimes\mathcal{Y}_{1}\otimes\mathcal{X}_{2}\otimes\mathcal{Y}_{2}\right)
\]
for every $i=1,\dots,4$. 

By weak duality 
\begin{equation}
p_{_{{\rm PPT}}}\left(\mathcal{B}_{\lambda,\varsigma}\otimes\tau_{\varepsilon}\right)\leq\mbox{Tr}\left(H_{\lambda,\varepsilon}\right)=\frac{1}{4}\left(1+\lambda+2\lambda\sqrt{1-\varepsilon^{2}}\right).\label{upper bound on p(PPT)}
\end{equation}
Since $p_{_{L}}\left(\mathcal{B}_{\lambda,\varsigma}\otimes\tau_{\varepsilon}\right)\leq p_{_{{\rm PPT}}}\left(\mathcal{B}_{\lambda,\varsigma}\otimes\tau_{\varepsilon}\right)$,
it holds that 
\begin{equation}
p_{_{L}}\left(\mathcal{B}_{\lambda,\varsigma}\otimes\tau_{\varepsilon}\right)\leq\frac{1}{4}\left(1+\lambda+2\lambda\sqrt{1-\varepsilon^{2}}\right).\label{upperbound-p-l-s-tau}
\end{equation}
We now give a lower bound on $p_{_{{\rm L}}}\left(\mathcal{B}_{\lambda,\varsigma}\otimes\tau_{\varepsilon}\right)$.
The lower bound is obtained using a strategy based on the teleportation
protocol. First, Alice teleports her qubit to Bob using $\tau_{\varepsilon}$
as the teleportation channel following the standard protocol: Alice
performs the Bell measurement and informs Bob of the outcome, and
Bob then applies the relevant unitary operation\footnote{The convention is as follows: If Alice gets $\Psi_{1}$, Bob does
nothing; if Alice gets $\Psi_{2}$, Bob applies $\sigma_{z}$, etc. }. This results in Bob holding one of the four two-qubit states from
\begin{alignat*}{1}
\varrho_{1}^{\prime} & =\lambda\tau_{\varepsilon}+\left(1-\lambda\right)\varsigma^{\prime},\\
\varrho_{2}^{\prime} & =\lambda\left(\boldsymbol{\mathsf{1}}\otimes\sigma_{z}\right)\tau_{\varepsilon}\left(\boldsymbol{\mathsf{1}}\otimes\sigma_{z}\right)+\left(1-\lambda\right)\varsigma^{\prime},\\
\varrho_{3}^{\prime} & =\lambda\left(\boldsymbol{\mathsf{1}}\otimes\sigma_{x}\right)\tau_{\varepsilon}\left(\boldsymbol{\mathsf{1}}\otimes\sigma_{x}\right)+\left(1-\lambda\right)\varsigma^{\prime},\\
\varrho_{4}^{\prime} & =\lambda\left(\boldsymbol{\mathsf{1}}\otimes\sigma_{y}\right)\tau_{\varepsilon}\left(\boldsymbol{\mathsf{1}}\otimes\sigma_{y}\right)+\left(1-\lambda\right)\varsigma^{\prime},
\end{alignat*}
where $\sigma_{x},\sigma_{y},\sigma_{z}$ are the Pauli matrices and
$\varsigma^{\prime}$ is the post-teleportation $\varsigma$. Now
once the teleportation part is over, Bob performs a measurement to
discriminate the states $\varrho_{i}^{\prime}$. In particular, he
performs the Bell measurement $\left\{ \Psi_{1},\Psi_{2},\Psi_{3},\Psi_{4}\right\} $,
which leads to
\begin{alignat}{1}
p_{_{{\rm L}}}\left(\mathcal{B}_{\lambda,\varsigma}\otimes\tau_{\varepsilon}\right) & \geq\frac{1}{4}\sum_{a=1}^{4}\max_{i}\left\langle \Psi_{a}\left|\varrho_{i}^{\prime}\right|\Psi_{a}\right\rangle \nonumber \\
 & =\frac{\lambda}{2}\left(1+\sqrt{1-\varepsilon^{2}}\right)+\left(\frac{1-\lambda}{4}\right)\sum_{a=1}^{4}\left\langle \Psi_{a}\left|\varsigma^{\prime}\right|\Psi_{a}\right\rangle \nonumber \\
 & =\frac{1}{4}\left(1+\lambda+2\lambda\sqrt{1-\varepsilon^{2}}\right).\label{lower-bound-p-l-s-tau}
\end{alignat}
From (\ref{upperbound-p-l-s-tau}) and (\ref{lower-bound-p-l-s-tau})
we obtain the desired result. This completes the proof. 
\end{proof}
Now we see that 
\[
p_{_{{\rm L}}}\left(\mathcal{B}_{\lambda,\varsigma}\otimes\tau_{\varepsilon}\right)\leq\frac{1}{4}\left(1+3\lambda\right)\leq p\left(\mathcal{B}_{\lambda,\varsigma}\right)\;\text{for}\;\lambda\in\left(0,1\right],
\]
where the first inequality is an equality for $\varepsilon=0$. In
other words, the best possible local success probability is obtained
only when $\left|\tau_{\varepsilon}\right\rangle $ is maximally entangled,
and that must also be, in this case, the global optimum. So we have
\begin{equation}
p\left(\mathcal{B}_{\lambda,\varsigma}\right)=p_{_{{\rm L}}}\left(\mathcal{B}_{\lambda,\varsigma}\otimes\tau_{\varepsilon=0}\right)=\frac{1}{4}\left(1+3\lambda\right).\label{The-equality}
\end{equation}
Therefore, the lower bound in Lemma \ref{lower-bound on p(S)} is,
in fact, the global optimum. 

To summarize, we have proved that for any $\lambda\in\left(0,1\right]$
and any two-qubit state $\varsigma$
\begin{align*}
p_{_{{\rm L}}}\left(\mathcal{B}_{\lambda,\varsigma}\otimes\tau_{\varepsilon}\right) & <p\left(\mathcal{B}_{\lambda,\varsigma}\right)\;\text{for }\varepsilon\in\left(0,1\right]\\
p_{_{{\rm L}}}\left(\mathcal{B}_{\lambda,\varsigma}\otimes\tau_{\varepsilon}\right) & =p\left(\mathcal{B}_{\lambda,\varsigma}\right)\;\text{for }\varepsilon=0.
\end{align*}
So the states of $\mathcal{B}_{\lambda,\varsigma}$ for any $\lambda\in\left(0,1\right]$
and any two-qubit state $\varsigma$ can be optimally discriminated
by LOCC if and only if the resource state $\left|\tau_{\varepsilon}\right\rangle $
is maximally entangled, i.e., $\varepsilon=0$. Now an optimal resource
state is the one that enables optimal discrimination by LOCC and is
also minimal in both entanglement and dimension. Noting that $\left|\tau_{\varepsilon=0}\right\rangle $
is from $\mathbb{C}^{2}\otimes\mathbb{C}^{2}$, we conclude that it
is an optimal resource. Now recall that the entanglement cost of discriminating
a set of states by LOCC is given by the entanglement of an optimal
resource. We have the following theorem. 
\begin{thm}
\label{The-entanglement-cost} The entanglement cost of optimal discrimination
of the states of $\mathcal{B}_{\lambda,\varsigma}$ by LOCC is $1$
ebit for any $\lambda\in\left(0,1\right]$ and any two-qubit state
$\varsigma$. 
\end{thm}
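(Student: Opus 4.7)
The strategy is to combine the explicit formula for $p_{_{{\rm L}}}(\mathcal{B}_{\lambda,\varsigma}\otimes\tau_\varepsilon)$ established in the preceding theorem with the lower bound on $p(\mathcal{B}_{\lambda,\varsigma})$ from Lemma~\ref{lower-bound on p(S)}, and then invoke the definition of an optimal resource. First I would point out the ``sandwich'' at $\varepsilon=0$: evaluating the formula gives $p_{_{{\rm L}}}(\mathcal{B}_{\lambda,\varsigma}\otimes\tau_0) = \frac{1}{4}(1+3\lambda)$, while Lemma~\ref{lower-bound on p(S)} gives $p(\mathcal{B}_{\lambda,\varsigma}) \geq \frac{1}{4}(1+3\lambda)$. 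Since $p_{_{{\rm L}}}(\mathcal{B}_{\lambda,\varsigma}\otimes\tau_0) \leq p(\mathcal{B}_{\lambda,\varsigma})$ always, both quantities must equal $\frac{1}{4}(1+3\lambda)$. This simultaneously pins down the exact value of the global optimum and shows that $|\tau_0\rangle$, a Bell state, enables optimal LOCC discrimination of $\mathcal{B}_{\lambda,\varsigma}$.

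Next I would prove that no pure two-qubit resource with strictly less entanglement suffices. From the same formula, for $\lambda\in(0,1]$ and $\varepsilon\in(0,1]$, one has the strict inequality $p_{_{{\rm L}}}(\mathcal{B}_{\lambda,\varsigma}\otimes\tau_\varepsilon) < \frac{1}{4}(1+3\lambda) = p(\mathcal{B}_{\lambda,\varsigma})$, so within the one-parameter family $\{|\tau_\varepsilon\rangle\}$ only $\varepsilon=0$ attains the global optimum. By the Schmidt decomposition remark following \eqref{tau}, every pure two-qubit state is locally unitarily equivalent to some $|\tau_\varepsilon\rangle$, and since local unitaries on the ancilla preserve the LOCC success probability, no pure two-qubit state of entanglement strictly below $1$ ebit can serve as an optimal resource.

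To conclude, I would verify the two minimality conditions in the definition of an optimal resource: $|\tau_0\rangle$ lives in $\mathbb{C}^2\otimes\mathbb{C}^2$, which is the smallest bipartite Hilbert space supporting entanglement, and its entanglement entropy equals $1$ ebit. Together these identify $|\tau_0\rangle$ as optimal and yield the stated entanglement cost. The only delicate point I anticipate is articulating why attention may legitimately be restricted to the one-parameter family $\{|\tau_\varepsilon\rangle\}$; the Schmidt decomposition handles this for two-qubit pure ancillas, and the dimension-minimality clause in the definition of the optimal resource is precisely what allows us to confine the comparison to two-qubit ancillas at the end.
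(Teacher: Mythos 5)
Your overall route---pin down $p(\mathcal{B}_{\lambda,\varsigma})$ at $\varepsilon=0$, rule out $\varepsilon>0$ by strict inequality, and reduce arbitrary two-qubit pure ancillas to the family $\left\{ \vert\tau_{\varepsilon}\rangle\right\} $ via the Schmidt decomposition---is the same as the paper's. But your opening ``sandwich'' does not close. You have three facts: $p_{_{{\rm L}}}\left(\mathcal{B}_{\lambda,\varsigma}\otimes\tau_{0}\right)=\frac{1}{4}\left(1+3\lambda\right)$, $p\left(\mathcal{B}_{\lambda,\varsigma}\right)\geq\frac{1}{4}\left(1+3\lambda\right)$, and $p_{_{{\rm L}}}\left(\mathcal{B}_{\lambda,\varsigma}\otimes\tau_{0}\right)\leq p\left(\mathcal{B}_{\lambda,\varsigma}\right)$. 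These are two lower bounds on the global optimum and no upper bound; they are all consistent with $p\left(\mathcal{B}_{\lambda,\varsigma}\right)$ being strictly larger than $\frac{1}{4}\left(1+3\lambda\right)$, in which case even a Bell state would fail to reach the global optimum and the claimed cost of $1$ ebit would not follow. The missing ingredient is the reverse inequality $p\left(\mathcal{B}_{\lambda,\varsigma}\right)\leq p_{_{{\rm L}}}\left(\mathcal{B}_{\lambda,\varsigma}\otimes\tau_{0}\right)$, which the paper supplies by the standard teleportation argument: when the ancilla is maximally entangled, Alice can teleport her qubit to Bob perfectly, after which Bob can implement whatever global measurement is optimal, so the Bell-state-assisted LOCC value is at least the global optimum. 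Only with this does the chain collapse to $p\left(\mathcal{B}_{\lambda,\varsigma}\right)=p_{_{{\rm L}}}\left(\mathcal{B}_{\lambda,\varsigma}\otimes\tau_{0}\right)=\frac{1}{4}\left(1+3\lambda\right)$, which establishes both the exact global optimum and the sufficiency of one ebit.

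The remainder of your argument is sound and mirrors the paper: for $\lambda\in\left(0,1\right]$ and $\varepsilon\in\left(0,1\right]$ the formula gives $p_{_{{\rm L}}}\left(\mathcal{B}_{\lambda,\varsigma}\otimes\tau_{\varepsilon}\right)<\frac{1}{4}\left(1+3\lambda\right)\leq p\left(\mathcal{B}_{\lambda,\varsigma}\right)$, and for this necessity direction only the lower bound of Lemma~\ref{lower-bound on p(S)} is needed, so no two-qubit pure ancilla with entanglement below one ebit suffices; the Schmidt-decomposition reduction and the appeal to dimension minimality are exactly the moves the paper makes implicitly. The gap is therefore localized to the determination of $p\left(\mathcal{B}_{\lambda,\varsigma}\right)$: add the teleportation step and the proof is complete.
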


\begin{rem}
Note that the entanglement cost in this case is independent of the
entanglement of the constituent states and also the choice of the
two-qubit state $\varsigma$. In fact, the entanglement cost is $1$
ebit as long as the gap between the local and global optima is nonzero.
Further, note that a set $\mathcal{B}_{\lambda,\varsigma}$ for $\lambda\in\left(0,1\right)$,
in general, does not contain a maximally entangled state. Such sets
are examples of sets that do not contain a maximally entangled state
but still require a maximally entangled state for optimal discrimination
by LOCC. 
\end{rem}

\begin{example}
\label{example} \emph{Bell states mixed with white noise}

Let us now consider a concrete example in which $\varsigma$ is taken
to be the maximally mixed state of two qubits, i.e., $\varsigma=\frac{1}{4}\boldsymbol{\mathsf{1}}_{\mathcal{X}_{1}\otimes\mathcal{Y}_{1}}$.
Then we have the following set of noisy Bell states: 
\[
\mathcal{B}_{\lambda,\frac{1}{4}\boldsymbol{\mathsf{1}}}=\left\{ \Omega_{i}:i=1,\dots,4\right\} ,
\]
where
\[
\Omega_{i}=\lambda\Psi_{i}+\frac{1-\lambda}{4}\boldsymbol{\mathsf{1}}_{\mathcal{X}_{1}\otimes\mathcal{Y}_{1}}
\]
for $\lambda\in\left(0,1\right]$. The results obtained earlier apply
straightaway. But now the range of $\lambda$ has a clear interpretation
in terms of the entanglement of the states: each state $\Omega_{i}$
, where $i=1,\dots,4$ is entangled if and only if $\lambda\in\left(\frac{1}{3},1\right]$.
So a set $\mathcal{B}_{\lambda,\frac{1}{4}\boldsymbol{\mathsf{1}}}$
contains entangled states for $\lambda\in\left(\frac{1}{3},1\right]$
and separable states for $\lambda\in\left(0,\frac{1}{3}\right]$.
But for any such set we now know that the entanglement cost of discrimination
by LOCC is $1$ ebit. So the entanglement cost, in this case, is independent
of the entanglement of the states. Furthermore, one requires a full
ebit even when the states are separable. 
\end{example}

\section{Conclusions}

A set of bipartite or multipartite quantum states cannot always be
optimally discriminated by LOCC. So given a set of states that cannot
be optimally discriminated by LOCC, a basic question is: How much
entanglement, as a resource, must one consume to perform the task
of optimal discrimination by LOCC? For instance, a set of three or
four Bell states can be perfectly discriminated by LOCC if and only
if a Bell state is used as a resource. 

In this paper we considered the problem of LOCC discrimination of
a uniform collection $\mathcal{B}_{\lambda,\varsigma}$ of noisy Bell
states that are obtained by mixing the Bell states with a two-qubit
state $\varsigma$ with probabilities $\lambda$ and $\left(1-\lambda\right)$.
First, we showed that the states of $\mathcal{B}_{\lambda,\varsigma}$
cannot be optimally discriminated by LOCC for any $\lambda\in\left(0,1\right]$
and $\varsigma$, so optimal discrimination will require an ancillary
entangled state. Since, such sets, in general, do not contain a maximally
entangled state, it was interesting to find out whether optimal discrimination
is possible without using a maximally entangled state. We, however,
proved that optimal discrimination by LOCC is possible if and only
if a two-qubit maximally entangled state is used as a resource for
any $\lambda\in\left(0,1\right]$ and $\varsigma$. So the result
holds regardless of the entanglement of the states, which could even
be separable in some cases. More specifically, the result holds as
long as the gap between the local and global optima is nonzero, no
matter how small. To prove our results we have utilized the fact that
determining the optimal value of discriminating via PPT measurements
can be represented as a semidefinite program \cite{Cosentino-2013}. 

There is at least one important application of results of this kind,
and that is related to the question of finding the entanglement cost
of nonlocal measurements \cite{BGKW-2009,BRW-2010}. For example,
the entanglement cost of local implementation of a nonlocal completely
orthogonal measurement must be at least as much as the entanglement
cost of locally discriminating the measurement eigenstates. That is
because if we could implement the measurement, we would be able to
perfectly distinguish the measurement eigenstates. This idea can be
extended to general nonlocal measurements as well. For instance, consider
the states in Example \ref{example}. It is straightforward to observe
that $\sum_{i=1}^{4}\Omega_{i}=\boldsymbol{\mathsf{1}}_{\mathcal{X}_{1}\otimes\mathcal{Y}_{1}}$.
Since $\Omega_{i}$ are positive operators, it follows that the collection
$\left\{ \Omega_{i}\right\} $ represents a noisy Bell measurement.
Our result, in particular, can be applied to obtain the entanglement
cost of locally implementing such measurements. 

An interesting open question is whether a nonmaximally entangled,
orthonormal basis of $\mathbb{C}^{2}\otimes\mathbb{C}^{2}$ can be
perfectly discriminated with LOCC using a nonmaximally entangled state.
One may, for instance, consider working with the bases in \cite{BGKW-2009,BRW-2010}
for which the lower bound was proved to be strictly larger than the
entropy bound given by the average entanglement of the states assuming
they are equally probable. 

Finally, we hope the results presented in this paper, particularly
the techniques \cite{Cosentino-2013,Cosentino-Russo-2014,B-IQC-2015}
used to prove the results, will be useful for future research in this
area.


\begin{thebibliography}{10}
\bibitem{LOCC} E. Chitambar, D. Leung, L. Mancinska, M. Ozols, and
A. Winter, Everything you always wanted to know about LOCC (but were
afraid to ask),  \href{https://link.springer.com/10.1007/s00220-014-1953-9}{Commun. Math. Phys. {\bf 328}, 303 (2014).} 

\bibitem{Entanglement-horodecki} R. Horodecki, P. Horodecki, M. Horodecki,
and K. Horodecki, Quantum entanglement, \href{https://doi.org/10.1103/RevModPhys.81.865}{Rev.
Mod. Phys. \textbf{81}, 865 (2009).}

\bibitem{Brunner-et-al-2014} N. Brunner, D. Cavalcanti, S. Pironio,
V. Scarani, and S. Wehner, Bell nonlocality, \href{https://doi.org/10.1103/RevModPhys.86.419}{Rev.
Mod. Phys. \textbf{86}, 419 (2014)}; Erratum, \href{https://doi.org/10.1103/RevModPhys.86.839}{Rev.
Mod. Phys. \textbf{86}, 839 (2014).}

\bibitem{Gisin-1996} N. Gisin, Hidden quantum nonlocality revealed
by local filters, \href{https://www.sciencedirect.com/science/article/pii/S0375960196800016}{Phys.
Lett. A \textbf{210}, 151 (1996)}.

\bibitem{Camlet-2017} S. Camalet, Measure-independent anomaly of
nonlocality, \href{https://journals.aps.org/pra/abstract/10.1103/PhysRevA.96.052332}{Phys. Rev. A {\bf 96}, 052332 (2017)}.

\bibitem{Peres-Wootters-1991} A. Peres and W. K. Wootters, Optimal
detection of quantum information, \href{https://doi.org/10.1103/PhysRevLett.66.1119}{Phys.
Rev. Lett. \textbf{66}, 1119 (1991)}.

\bibitem{Massar-Popescu-1995} S. Massar and S. Popescu, Optimal extraction
of information from finite quantum ensembles, \href{https://doi.org/10.1103/PhysRevLett.74.1259}{Phys.
Rev. Lett. \textbf{74}, 1259 (1995)}.

\bibitem{ben99} C. H. Bennett, D. P. DiVincenzo, C. A. Fuchs, T.
Mor, E. Rains, P. W. Shor, J. A. Smolin, and W. K. Wootters, Quantum
nonlocality without entanglement, \href{https://doi.org/10.1103/PhysRevA.59.1070}{Phys.
Rev. A \textbf{59}, 1070 (1999)}.

\bibitem{walgate-2002} J. Walgate and L. Hardy, Nonlocality, asymmetry,
and distinguishing bipartite states, \href{https://doi.org/10.1103/PhysRevLett.89.147901}{Phys.
Rev. Lett. \textbf{89}, 147901 (2002)}.

\bibitem{HSSH} M. Horodecki, A. Sen(De), U. Sen, and K. Horodecki,
Local indistinguishability: More nonlocality with less entanglement,
\href{https://doi.org/10.1103/PhysRevLett.90.047902}{Phys. Rev. Lett.
\textbf{90}, 047902 (2003)}.

\bibitem{nis06} J. Niset and N. J. Cerf, Multipartite nonlocality
without entanglement in many dimensions, \href{https://doi.org/10.1103/PhysRevA.74.052103}{Phys.
Rev. A \textbf{74}, 052103 (2006)}.

\bibitem{Bandyo-2011} S. Bandyopadhyay, More nonlocality with less
purity, \href{https://doi.org/10.1103/PhysRevLett.106.210402}{Phys.
Rev. Lett. \textbf{106}, 210402 (2011)}.

\bibitem{Halder+-2019} S. Halder, M. Banik, S. Agrawal, and S. Bandyopadhyay,
Strong quantum nonlocality without entanglement, \href{https://journals.aps.org/prl/abstract/10.1103/PhysRevLett.122.040403}{Phys. Rev. Lett. {\bf 122}, 040403 (2019)}.

\bibitem{Barrett+2005} J. Barrett, L. Hardy, and A. Kent, No signaling
and quantum key distribution, \href{https://doi.org/10.1103/PhysRevLett.95.010503}{Phys.
Rev. Lett. \textbf{95}, 010503 (2005)}.

\bibitem{Acin+2006} A. Acín, N. Gisin, and L. Masanes, From Bell's
theorem to secure quantum key distribution, \href{https://doi.org/10.1103/PhysRevLett.97.120405}{Phys.
Rev. Lett. \textbf{97}, 120405 (2006)}.

\bibitem{Brunner+2008} N. Brunner, S. Pironio, A. Acin, N. Gisin,
A. A. Methot, and V. Scarani, Testing the dimension of Hilbert spaces,
\href{https://doi.org/10.1103/PhysRevLett.100.210503}{Phys. Rev.
Lett. \textbf{100}, 210503 (2008)}.

\bibitem{Pironio+2010} S. Pironio, A. Acín, S. Massar, A. Boyer de
la Giroday, D. N. Matsukevich, P. Maunz, S. Olmschenk, D. Hayes, L.
Luo, T. A. Manning, and C. Monroe, Random numbers certified by Bell's
theorem, \href{https://doi.org/10.1038/nature09008}{Nature \textbf{464},
1021 (2010)}.

\bibitem{Terhal2001} B. M. Terhal, D. P. DiVincenzo, and D. W. Leung,
Hiding bits in Bell states, \href{https://journals.aps.org/prl/abstract/10.1103/PhysRevLett.86.5807}{Phys. Rev. Lett. {\bf 86}, 5807 (2001)}.

\bibitem{DiVincenzo2002} D. P. DiVincenzo, D. W. Leung, and B. M.
Terhal, Quantum data hiding, \href{https://ieeexplore.ieee.org/abstract/document/985948}{IEEE Trans. Inf. Theory {\bf 48}, 580 (2002)}.

\bibitem{Eggeling2002} T. Eggeling, and R. F. Werner, Hiding classical
data in multipartite quantum states, \href{https://journals.aps.org/prl/abstract/10.1103/PhysRevLett.89.097905}{Phys. Rev. Lett. {\bf 89}, 097905 (2002)}.

\bibitem{MatthewsWehnerWinter} W. Matthews, S. Wehner, A. Winter,
Distinguishability of quantum states under restricted families of
measurements with an application to quantum data hiding, \href{https://link.springer.com/article/10.1007/s00220-009-0890-5}{Commun. Math. Phys. {\bf 291},  831 (2009)}.

\bibitem{Markham-Sanders-2008} D. Markham and B. C. Sanders, Graph
states for quantum secret sharing, \href{https://journals.aps.org/pra/abstract/10.1103/PhysRevA.78.042309}{Phys. Rev. A {\bf 78}, 042309 (2008)}.

\bibitem{ben99u} C. H. Bennett, D. P. DiVincenzo, T. Mor, P. W. Shor,
J. A. Smolin, and B. M. Terhal, Unextendible product bases and bound
entanglement, \href{https://doi.org/10.1103/PhysRevLett.82.5385}{Phys. Rev. Lett. {\bf 82}, 1070 (1999)}.

\bibitem{Walgate-2000} J. Walgate, A. J. Short, L. Hardy, and V.
Vedral, Local distinguishability of multipartite orthogonal quantum
states, \href{https://doi.org/10.1103/PhysRevLett.85.4972}{Phys.
Rev. Lett. \textbf{85}, 4972 (2000)}.

\bibitem{Virmani-2001} S. Virmani, M. F. Sacchi, M. B. Plenio, D.
Markham, Optimal local discrimination of two multipartite pure states,
\href{https://doi.org/10.1016/S0375-9601(01)00484-4}{Phys. Lett.
A. \textbf{288}, 62 (2001)}.

\bibitem{Ghosh-2001} S. Ghosh, G. Kar, A. Roy, A. Sen (De), and U.
Sen, Distinguishability of Bell states, \href{https://doi.org/10.1103/PhysRevLett.87.277902}{Phys.
Rev. Lett. \textbf{87}, 277902 (2001)}.

\bibitem{divin03} D. P. DiVincenzo, T. Mor, P. W. Shor, J. A. Smolin,
and B. M. Terhal, Unextendible product bases, uncompletable product
bases and bound entanglement, \href{https://doi.org/10.1007/s00220-003-0877-6}{Commun.
Math. Phys. \textbf{238}, 379 (2003)}.

\bibitem{rin04} S. De Rinaldis, Distinguishability of complete and
unextendible product bases, \href{https://doi.org/10.1103/PhysRevA.70.022309}{Phys.
Rev. A \textbf{70}, 022309 (2004)}.

\bibitem{Ghosh-2004} S. Ghosh, G. Kar, A. Roy, and D. Sarkar, Distinguishability
of maximally entangled states, \href{https://doi.org/10.1103/PhysRevA.70.022304}{Phys.
Rev. A \textbf{70}, 022304 (2004)}.

\bibitem{fan-2005} H. Fan, Distinguishability and indistinguishability
by local operations and classical communication, \href{https://doi.org/10.1103/PhysRevLett.92.177905}{Phys.
Rev. Lett. \textbf{92}, 177905 (2004)}.

\bibitem{Watrous-2005} J. Watrous, Bipartite subspaces having no
bases distinguishable by local operations and classical communication,
\href{https://doi.org/10.1103/PhysRevLett.95.080505}{Phys. Rev. Lett.
\textbf{95}, 080505 (2005)}.

\bibitem{Nathanson-2005} M. Nathanson, Distinguishing bipartite orthogonal
states by LOCC: Best and worst cases, \href{https://doi.org/10.1063/1.1914731}{Journal
of Mathematical Physics \textbf{46}, 062103 (2005)}.

\bibitem{Wootters-2006} W. K. Wootters, Distinguishing unentangled
states with an unentangled measurement, \href{https://doi.org/10.1142/S0219749906001724}{Int.
J. Quantum Inf. \textbf{4}, 219 (2006)}.

\bibitem{Hayashi-etal-2006} M. Hayashi, D. Markham, M. Murao, M.
Owari, and S. Virmani, Bounds on entangled orthogonal state discrimination
using local operations and classical communication, \href{https://doi.org/10.1103/PhysRevLett.96.040501}{Phys.
Rev. Lett. \textbf{96}, 040501 (2006)}.

\bibitem{Duan2007} R. Y. Duan, Y. Feng, Z. F. Ji, and M. S. Ying,
Distinguishing arbitrary multipartite basis unambiguously using local
operations and classical communication, \href{https://doi.org/10.1103/PhysRevLett.98.230502}{Phys.
Rev. Lett. \textbf{98}, 230502 (2007)}.

\bibitem{feng09} Y. Feng and Y.-Y. Shi, Characterizing locally indistinguishable
orthogonal product states, \href{https://doi.org/10.1109/TIT.2009.2018330}{IEEE
Trans. Inf. Theory \textbf{55}, 2799 (2009)}.

\bibitem{Duan-2009} R. Y. Duan, Y. Feng, Y. Xin, and M. S. Ying,
Distinguishability of quantum states by separable operations, \href{https://doi.org/10.1109/TIT.2008.2011524}{IEEE
Trans. Inform. Theory \textbf{55}, 1320 (2009)}.

\bibitem{Calsamiglia-2010} J. Calsamiglia, J. I. de Vicente, R. Munoz-Tapia,
E. Bagan, Local discrimination of mixed states, \href{https://doi.org/10.1103/PhysRevLett.105.080504}{Phys.
Rev. Lett. \textbf{105}, 080504 (2010)}.

\bibitem{BGK-2011} S. Bandyopadhyay, S. Ghosh and G. Kar, LOCC distinguishability
of unilaterally transformable quantum states, \href{https://doi.org/10.1088/1367-2630/13/12/123013}{New
J. Phys. \textbf{13}, 123013 (2011)}.

\bibitem{Yu-Duan-2012} N. Yu, R. Duan, and M. Ying, Four locally
indistinguishable ququad-ququad orthogonal maximally entangled states,
\href{https://doi.org/10.1103/PhysRevLett.109.020506}{Phys. Rev.
Lett. \textbf{109}, 020506 (2012)}.

\bibitem{Cosentino-2013} A. Cosentino, Positive-partial-transpose-indistinguishable
states via semidefinite programming, \href{https://doi.org/10.1103/PhysRevA.87.012321}{Phys.
Rev. A \textbf{87}, 012321 (2013).}

\bibitem{Cosentino-Russo-2014} A. Cosentino and V. Russo, Small sets
of locally indistinguishable orthogonal maximally entangled states,
\href{https://dl.acm.org/doi/abs/10.5555/2685164.2685167}{Quantum Information and  Computation {\bf14}, 1098 (2014)}.

\bibitem{halder} S. Halder, Several nonlocal sets of multipartite
pure orthogonal product states, \href{https://journals.aps.org/pra/abstract/10.1103/PhysRevA.98.022303}{Phys. Rev. A {\bf 98}, 022303 (2018)}.

\bibitem{Holevo-1973} A. S. Holevo, Statistical decision theory for
quantum systems, \href{https://www.sciencedirect.com/science/article/pii/0047259X73900286}{J. Multivariate Anal. {\bf 3}, 337 (1973)}.

\bibitem{Helstrom-1976} C. W. Helstrom, Quantum Detection and Estimation
Theory, Academic Press, New York, 1976. 

\bibitem{Eldar+2003} Y. C. Eldar, A. Megretski, and G. C. Verghess,
Designing optimal quantum detectors via semidefinite programming,
\href{https://ieeexplore.ieee.org/abstract/document/1193807}{IEEE Trans. Inf. Theory {\bf 49}, 1007 (2003)}.

\bibitem{Qiu-2008} D. Qiu, Minimum-error discrimination between mixed
quantum states, \href{https://journals.aps.org/pra/abstract/10.1103/PhysRevA.77.012328}{Phys. Rev. A {\bf 77}, 012328 (2008)}.

\bibitem{Qiu-Li-2010} D. Qiu and L. Li, Minimum-error discrimination
of quantum states: Bounds and comparisons, \href{https://journals.aps.org/pra/abstract/10.1103/PhysRevA.81.042329}{Phys. Rev. A {\bf 81}, 042329  (2010)}.

\bibitem{BN-2013} S. Bandyopadhyay and M. Nathanson, Tight bounds
on the distinguishability of quantum states under separable measurements,
\href{https://doi.org/10.1103/PhysRevA.88.052313}{Phys. Rev. A {\bf 88}, 052313 (2013)}.

\bibitem{B-IQC-2015} S. Bandyopadhyay, A. Cosentino, N. Johnston,
V. Russo, J. Watrous, and N. Yu, Limitations on separable measurements
by convex optimization, \href{https://ieeexplore.ieee.org/abstract/document/7086052}{IEEE Trans. Inf. Theory {\bf 61}, 3593 (2015)}.

\bibitem{BGKW-2009} S. Bandyopadhyay, G. Brassard, S. Kimmel, and
W. K. Wootters, Entanglement cost of nonlocal measurements, \href{https://journals.aps.org/pra/abstract/10.1103/PhysRevA.80.012313}{Phys. Rev. A {\bf 80}, 012313 (2009)}. 

\bibitem{BRW-2010} S. Bandyopadhyay, R. Rahaman, and W. K.Wootters,
Entanglement cost of two-qubit orthogonal measurements, \href{https://iopscience.iop.org/article/10.1088/1751-8113/43/45/455303/meta}{J. Phys. A: Math. Theor. {\bf 43}, 455303 (2010)}.

\bibitem{Cohen-2008} S. M. Cohen, Understanding entanglement as resource:
Locally distinguishing unextendible product bases, \href{https://journals.aps.org/pra/abstract/10.1103/PhysRevA.77.012304}{Phys. Rev. A {\bf 77}, 012304 (2008)}.

\bibitem{BHN-2016} S. Bandyopadhyay, S. Halder, and M. Nathanson,
Entanglement as a resource for local state discrimination in multipartite
systems, \href{https://journals.aps.org/pra/abstract/10.1103/PhysRevA.94.022311}{Phys. Rev. A {\bf 94}, 022311 (2016)}. 

\bibitem{BHN-2018} S. Bandyopadhyay, S. Halder, and M. Nathanson,
Optimal resource states for local state discrimination, \href{https://journals.aps.org/pra/abstract/10.1103/PhysRevA.97.022314}{Phys. Rev. A {\bf 97}, 022314 (2018)}.

\bibitem{LJ-2020} B. Lovitz, N. Johnston, Entangled subspaces and
generic local state discrimination with pre-shared entanglement, \href{https://arxiv.org/abs/2010.02876}{arXiv:2010.02876, 2020}. 

\bibitem{Yu-duan-PPT-2014} N. Yu, R. Duan, and M. Ying, Distinguishability
of quantum states by positive operator-valued measures with positive
partial transpose, \href{https://ieeexplore.ieee.org/abstract/document/6747300}{IEEE Trans. Inf. Theory {\bf 60}, 2069 (2014)}.
\end{thebibliography}
\end{document}